\def\hi{({\sf hi})}
\def\lo{({\sf lo})}
\newtheorem{proposition}{Proposition}[section]
\newtheorem{theorem}[proposition]{Theorem}
\title{Modeling and Control of Rare Segments in BitTorrent with Epidemic Dynamics\thanks{A shorter version of this paper that did not include the $N$-segment lumped model was presented in May 2011 at IEEE ICC, Kyoto.}}
\author{
C. Griffin\footnote{C. Griffin is the with Applied Research Laboratory, Penn State University, University Park, PA 16802, E-mail: \texttt{griffinch@ieee.org}}~
G. Kesidis\footnote{G. Kesidis is the with Depts. of Electrical Engineering and Computer Science and Engineering, Penn State University, University Park, PA 16802, E-mail: \texttt{gik2@psu.edu}}~
P. Antoniadis\footnote{P. Antoniadis is with the Communications Systems Group, ETH Zurich, E-mail: \texttt{antoniadis@tik.ee.ethz.ch}} ~and~
S. Fdida\footnote{S. Fdida are with the Computer Science Dept., Univ. Pierre \& Marie Curie (LIP6), \texttt{serge.fdida@lip6.fr}}
}
\begin{document}
\maketitle

\begin{abstract}
Despite its existing incentives for leecher cooperation, BitTorrent file sharing fundamentally relies on the presence of seeder peers. Seeder peers essentially operate outside the BitTorrent incentives, with two caveats: slow downlinks lead to increased numbers of ``temporary" seeders (who left their console, but will terminate their seeder role when they return), and the copyright liability boon that file segmentation offers for permanent seeders. Using a simple epidemic model for a two-segment BitTorrent swarm, we focus on the BitTorrent rule to disseminate the (locally) rarest segments first. With our model, we show that the rarest-segment first rule minimizes transition time to seeder (complete file acquisition) and equalizes the segment populations in steady-state. We discuss how alternative dissemination rules may {\em beneficially increase} file acquisition times causing  leechers to remain in the system longer (particularly as temporary seeders).  The result is that leechers are further enticed to cooperate. This eliminates the threat of extinction of rare segments which is prevented by the needed presence of permanent seeders. Our model allows us to study the corresponding trade-offs between performance improvement, load on permanent seeders, and content availability, which we leave for future work.
Finally, interpreting the two-segment model as one involving a rare segment and a ``lumped" segment representing the rest, we study a model that jointly considers control of rare segments and different uplinks causing ``choking," 
where high-uplink peers will not engage in certain transactions with 
low-uplink peers.
\end{abstract}


\section{Introduction}

There are several different  incentives in the BitTorrent protocol: the segmentation of the data object (file) into pieces\footnote{Alternatively called chunks, segments or blocks in the BitTorrent literature and herein.} to promote swapping of pieces among peers in a swarm, the dissemination strategy of the file pieces (rarest-first),  the uplink reciprocity (choking) strategy when swapping pieces, and the optimistic unchoking strategy. The configuration of these rules can significantly affect performance  under different scenarios and assumptions (e.g., the size of a swarm and of individual  neighborhoods of interacting peers \cite{Legout06}, the amount  of asymmetry between, and distribution of, uplink capacities, etc.).

There is a significant literature  on modeling  the properties of 
the existing BitTorrent algorithm,
e.g.,  \cite{BitTorrent,Cohen:2003,Qiu:2004,Antoniadis:2004,Chow09},
some with an aim to improve its performance. 
Our model is different from those explored in
\cite{Mass-Voj-05,Yang:2004,Qiu:2004} for BitTorrent,
and we compute different quantities of interest. 
In \cite{bittor-imc05}, the authors propose a ``fluid" model
of a single torrent/swarm (as we do in the following) and fit it
to (transient) data drawn from aggregate swarms.
In \cite{Norros09}, they consider a similar model with
normalized terms the effect of which is a nonlinear time-dilation
of the transient dynamics.
The connection to branching
process models \cite{Yang:2004,Ge:2004} is simply
that ours only tracks the number of active peers who possess or
demand the file under consideration, i.e., a single swarm.
In \cite{bitmax} a strategy called BitMax is proposed that can fully use upload capacities of ``resourceful" peers and thus improve performance without the reciprocity strategy implemented today in BitTorrent. 
So, there is a clear tension between maximizing global performance and fairness in BitTorrent~\cite{tradeoffs}.
This means that if certain peers are required to share more resources than they will need to consume themselves, they might choose not to join the system or try to prematurely defect. Studies of incentives primarily focus on reciprocity mechanisms in terms of upload bandwidth, see e.g., \cite{ma:2004,Ma:2004:1} with the objective of fairness. 

Although this is a theoretically interesting question, in practice there are many users that are typically understood  as not behaving rationally on BitTorrent, including the significant number of seeders both for popular and unpopular content~\cite{Bieber06}. The presence of ``permanent" seeders is enabled by  the typically flat-rate pricing (without quotas) for residential Internet access \cite{CISS10}, and the file segmentation itself provides some limitation of liability for illegal dissemination of copyrighted material\footnote{Swarm discovery via third parties, e.g.,  search and downloading torrent files from certain web sites, offers additional limitation of liability for permanent seeders.}. Also, segment extinction is precluded  by the presence of permanent seeders.

It is also well understood that peers spending additional time on-line will improve overall content availability, since while participating in a swarm downloading content, peers do disseminate file pieces belonging to other swarms up to and including the point at which they acquire the entire file and become seeders.

With the presence of \textit{permanent} seeders,  the extinction of rare pieces is not a threat. The presence of temporary seeders is desirably increased by extended download times, as the leecher peers may leave their console while waiting and become seeders while they are absent \cite{Bieber06}.

A delaying strategy may be also implemented by seeders to limit their upload capacity in an ad hoc fashion; a simulation \cite{adar} studied a  seeder strategy to reduce its upload throughput particularly for non-popular items. A general public good model was proposed in \cite{sigecom} focusing on content availability in which the main contribution  of peers is their time on-line instead of upload capacity. 

A main objective of this paper is to study strategies of file-segment dissemination, based on segment rarity, to explore the trade-off between improving download performance and enticing cooperative activity through longer downloading times for leecher peers. That is, deviations from rarest first segment distribution will have the \textit{beneficial} effect of additionally delaying the leechers, all or just some of them, with segment extinction precluded by the presence of permanent seeders. As such, we are interested in the transient behavior of a given swarm, rather than a generic transaction process among a fixed group of peers. The model we use will reflect this emphasis. 

A shorter version of this paper was presented in May 2011 at IEEE ICC, Kyoto. In that paper, we studied only a two-segment model. In this paper, we extend the results of the original paper by considering an $N$ segment model with an intentionally rare final segment as well as two peer classes -- one with a high throughput rate and the other with a low throughput rate. We derive the differential equation model for this case and show how this variation affects sojourn time from leecher to seeder in both classes. 


The remainder of this paper is organized as follows. In Section \ref{sec:EpidemicModel} and \ref{sec:SelectionControl}, we describe a two-piece deterministic epidemic model of a swarm, similar to one previously studied in \cite{NETCOOP08} as a special case of a deterministic limit of a stochastic sequential transactional model; but here we consider a control parameter governing which piece is disseminated given that there is choice. In Section \ref{sec:opt-control-sec}, we argue that the  ``bang-bang" globally rarest first is optimal in terms of overall download time. In Section \ref{sec:equilib}, we describe the equilibria under continuous globally (and locally) rarest first control. In Section \ref{sec:controlled-rarity}, we discuss how the rarity of file segments can be  deliberately controlled by the seeders. In Section 
\ref{rare-uplink-sec}, we interpret a two-segment model as one modeling
a rare segment and a lumping of the rest to study rare-segment control under 
choking due to differences in uplink bandwidths of the peers.
Finally, we conclude in Section \ref{concl-sec} with a summary.

\section{Epidemic model}\label{sec:EpidemicModel}
Let $\lambda$ be the total peer arrival rate to a specific swarm, where newly arrived peers possess no part of the data object $F$ being disseminated in the swarm. The quantity $\delta$ is the death rate for seeders--individuals that possess the entire file and seed the population with its segments. Let $\beta$ be the download rate parameter of client-server transactions, which depend on the size of the file being transmitted and the associated willingness of the server peer to participate in the transaction (for nothing in return from the client peer). 

\subsection{One Segment Model}
In the absence of BitTorrent incentives, we have the single-segment case
\begin{displaymath}
\dot x_l = - \beta_0 x_l x_s + \lambda_l \quad
\dot x_s = \beta_0 x_l x_s -\delta x_s+\lambda_s,
\end{displaymath}
where $x_l$ are the leechers (and do not possess any parts of the file),
$x_s$ are the seeders (who possess the complete file), and
$\lambda$ their exogenous arrival rates to the swarm.
 The successful transaction rate is  proportional to the contact rate between a member of the seeder and leecher populations, which we assume to be proportional to  the product of their sizes \cite{Daley-Gani}. We chose this model for concreteness; other types of models, some similar to  the above, have also been extensively studied, e.g., urn, replicator, Volterra-Lotka, and coupon-collector \cite{Voj-Mass05}.

Therefore there are  two types of peers in the seeder state: those that arrive as seeders and tend to remain longer  and those that arrive formerly as leechers and tend to remain briefly. Rather than using a fixed population of ``permanent" seeders for the former category, we model them by a small external arrival process  $\lambda_s$, giving an average population of $\lambda_s/\delta_s$ by Little's formula \cite{Wolff89}, with $\frac{1}{\delta_s}\gg \frac{1}{\delta_l}$. The mean lifetime $1/\delta$  of a {\em typical} seeder is therefore the weighted average:
\begin{equation}
\frac{1}{\delta} = \frac{\lambda_l/\delta_l}{\lambda_l/\delta_l+\lambda_s/\delta_s}\cdot\frac{1}{\delta_l}
+ \frac{\lambda_s/\delta_s}{\lambda_l/\delta_l+\lambda_s/\delta_s}\cdot\frac{1}{
\delta_s}
\end{equation}

The globally attracting stable equilibrium is given by 
\begin{displaymath}
\mathbf{x}^* = (x_l^*, x_s^*) =  \left(
\frac{\delta\lambda_l}{\beta(\lambda_s+\lambda_l)}, ~ 
\frac{\lambda_l+\lambda_s}{\delta}\right),
\end{displaymath}

\subsection{Two-Segment Model}

Consider splitting file $F$ into two segments, $a$ and $b$. In this case, the model for this system becomes:
\begin{equation}
\left\{
\begin{aligned}
\dot x_l & = \lambda_l - \beta  x_l ( x_a+ x_b+ x_s) \\
\dot {x}_a &  = - x_a ( \beta  x_s +\gamma  x_b) +  
	\beta  x_l (  x_a + u(x_a,x_b) x_s )\\
\dot {x}_b & = - x_b ( \beta  x_s   + \gamma  x_a) +  
	\beta  x_l (  x_b + [1-u(x_a,x_b)] x_s   )\\
\dot{x}_s & =  \lambda_s + \beta ( x_a+ x_b )  x_s   + 
	2 \gamma  x_a  x_b -\delta  x_s
\end{aligned}\right.
\label{eqn:SystemModel}
\end{equation}
Here, $\gamma$ represents is  rate parameter  for swap/trade transactions ($\beta$ and $\gamma$ may be decreasing functions $N$). The ``control" function $u \in [0,1]$ represents how the seeder may distribute the segments based on its knowledge of their relative prevalence; whereas in BitTorrent, the rarity of the segment is locally determined  among peers that are directly transacting, i.e., BitTorrent uses \textit{locally} rarest first policy to determine which segments to disseminate. In \cite{NETCOOP08}, $u \equiv 1/2$ was assumed.

We assume that $\gamma \geq \beta > \beta_0$,
where the former inequality is owing to stronger ``server" incentives in a swap transaction compared to a client server transaction.  Also, the latter inequality is owing  to $a$ and $b$ being smaller than $F$ and peers would generally be more reluctant to transmit the entire file $F$ in one shot out of concerns of liability for copyright violation. In \cite{NETCOOP08}, for $u \equiv 1/2 $ (constant control), we showed convergence of a scaled stochastic discrete transactional process to the above epidemic dynamics  and compared pure client-server to the two-segment system in terms of time to transition from leecher to seeder.

It is worth noting that BitTorrent does not function precisely in this way. Permanent BitTorrent seeders do not truly have control over the pieces they chose to transmit to members of the swarm, as swarms use pull (as opposed to push) request frames. This model assumes that seeders (through some mechanism) will have control over the fragments they push to seeders through the $u(x_a,x_b)$ parameter.

\section{Two-Segment Selection Control}\label{sec:SelectionControl}

For simplicity in the following,
we focus on the two-segment swarm (\ref{eqn:SystemModel}).
When $u\equiv 1/2$, then half the successful transactions between $x_l$ and $x_s$ result in an arrival to the $ x_a$ population (that possess only the first segment of $F$),  and the other half an arrival to the $x_b$ population (in both cases, a departure from the $ x_a$ population,
of course). 
When $u \equiv 1/2$, 
there is always at least one equilibrium solution 
given by
\cite{NETCOOP08}:
\begin{equation}
\begin{aligned}
x_l = &\frac{\lambda_l}{\beta}\left(\sigma_0+\frac{\lambda_l+\lambda_s}{\delta}\right)^{-1} &\quad 
x_a = & \frac{\sigma_0}{2},\\
x_b = & \frac{\sigma_0}{2}, &\quad
x_s = & \frac{\lambda_l+\lambda_s}{\delta} 
\end{aligned}
\label{eqn:u_equiv_half}
\end{equation}
where $\sigma_0$ is the unique positive root of the quadratic equation:
$\sigma_0^2 + 2\kappa_0 \sigma_0 - {2\lambda_l}/{\gamma}$
and
$\kappa_0  := \frac{ \beta(\lambda_l+\lambda_s)}{\gamma \delta}$.

The case of a \textit{constant control}  (allowing for a constant $u \in [0,1]$) is complex and leads to the analysis of a quartic equation without providing much insight into the system. We consider this case in Section \ref{rare-uplink-sec} through a numerical study. In general, the control will vary as a function of state $(x_l,x_a,x_b,x_s)$. The \textit{continuous globally rarest first} control is simply
\begin{equation}
u(x_a,x_b) =  \frac{x_b}{x_a+x_b}
\label{eqn:lrf-cts}
\end{equation}
The presumption here is that the seeders have an estimate of the ratio of population sizes $x_a/x_b$. A non-continuous, ``bang-bang" version of this rule, requiring less information for the seeders, is
\begin{equation}
u(x_a,x_b) = \begin{cases}
1 & \text{if $x_a < x_b$}\\
\frac{1}{2} & \text{if $x_a = x_b$}\\
0 & \text{if $x_a > x_b$}
\end{cases}
\label{eqn:lrf-bang-bang}
\end{equation}
Again, BitTorrent used a \textit{locally} rarest first control consistent with (\ref{eqn:lrf-bang-bang}). Note that both of these controls admit the equilibrium for $u \equiv 1/2$ of the previous section.

\section{Minimizing Traversal Time}\label{sec:opt-control-sec}
Consider the Mayer optimal control problem, with control $u(\mathbf{x},t)$:
\begin{equation}
\left\{
\begin{aligned}
\min_u\;\; & x_l(T)+x_a(T)+x_b(T) ~\text{subject to:}\\
&\text{the system model (\ref{eqn:SystemModel}),}\\
&\mathbf{x}(0) = \mathbf{x}^0,\\
&u \in [0,1],\, \mathbf{x}(t) \geq 0\;\;\forall t \in [0,T].
\end{aligned}
\right.
\label{eqn:OptControl}
\end{equation}
Here we assume that $T$ is a finite ending time that may be arbitrarily large. Naturally we assume that $\mathbf{x}, u\in \mathcal{L}^2[0,t]$, the space of bounded square integrable functions and $T$ is some arbitrary large but \textit{finite} end of time.

The objective is motivated by Little's formula \cite{Wolff89} which states that, {\em for a stationary regime}, the sojourn time 
from arrival to a swarm as leecher to the transition to seeder
  is 
\begin{equation}
\frac{x_l^* + x^*_a + x^*_b}{\lambda_l}.
\label{Little-sojourn}
\end{equation}
 Our assertion that $T$ is finite comes from the qualitative analysis of the differential equations given in System (\ref{eqn:SystemModel}). We argue that Expression (\ref{eqn:lrf-bang-bang}) is the control that minimizes the objective subject to these epidemic dynamics, beginning from an arbitrary initial point $\mathbf{x}(0)$. Expression (\ref{eqn:lrf-bang-bang}) is the fully discrete form of Expression (\ref{eqn:lrf-cts}). 

There is always at least one attracting equilibrium point for the epidemic dynamics (\ref{eqn:SystemModel}) when $u \equiv 1$ (this is the value of Expression (\ref{eqn:lrf-bang-bang}) when $x_a < x_b$). This equilibrium occurs at:
\begin{displaymath}
x^*_a = \frac{\lambda\delta}{(\lambda_s + \lambda)\beta},  \quad 
x^*_b = 0.
\end{displaymath}
If we assume that $u$ is defined by Expression (\ref{eqn:lrf-bang-bang}) and that $x_a(0) < x_b(0)$, we note that before the foregoing equilibrium is reached, we obtain $x_a(t) = x_b(t)$ for some $t$, and we return to the dynamics in the case when $u \equiv 1/2$ (see Figure \ref{fig:Control}). In this case, we will maintain $x_a(t) = x_b(t)$ and move to the equilibrium point already identified in Expression (\ref{eqn:u_equiv_half}). A similar argument holds when $x_b(0) < x_a(0)$ in which case $u \equiv 0$. Again we will return to the dynamics when $u \equiv 1/2$ before $x_a$ reaches $0$, which is the equilibrium in this case. 

\begin{figure}[htbp]
\centering
\includegraphics[scale=0.4]{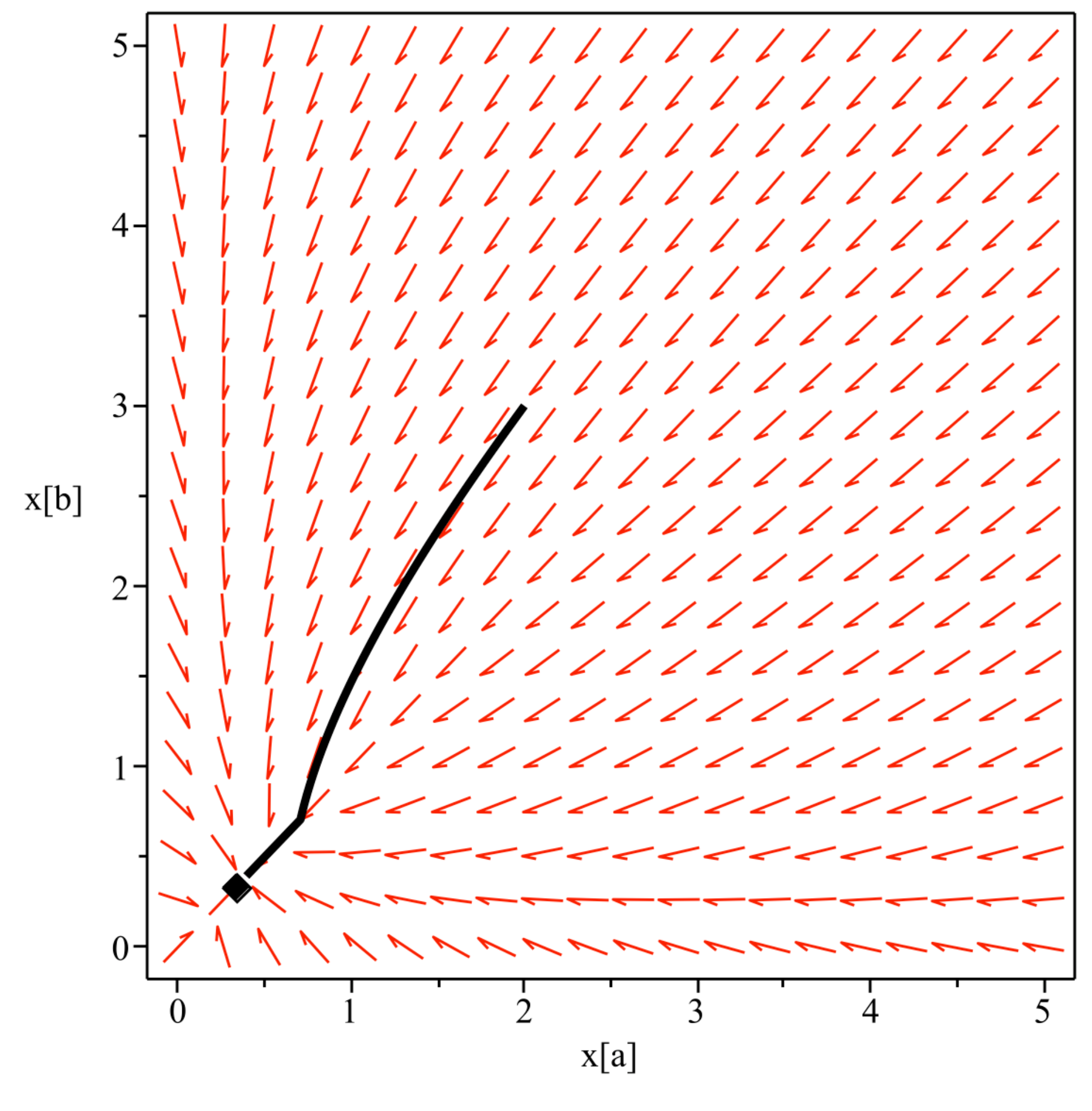}
\caption{Phase plot in $(x_a, x_b)$ space showing the bang-bang controller pushing $x_a$ to equal $x_b$ and then proceeding to a globally attracting stationary point. In this example, $\beta = 2$, $\gamma = 3$, $\lambda_s = 1$, $\lambda_l = 4$ and $\delta = 2$.} 
\label{fig:Control}
\end{figure}

Observe that the Hamiltonian of the control problem 
(\ref{eqn:OptControl})
will be linear in $u$. Thus,  the bang-bang controller
of Expression (\ref{eqn:lrf-bang-bang}) is optimal
 \cite{bang-bang-ref}.
 The bang-bang controller will switch its state depending on the adjoint dynamics of the system and may be singular for certain adjoint dynamics. The only reasonable singular control in this case is $u \equiv 1/2$ 
(used whenever $x_a=x_b$), while a reasonable proxy for the adjoint conditions is given in Expression (\ref{eqn:lrf-bang-bang}). 

We illustrate the optimality of the discontinuous globally rarest first controller through a numerical example. Figure \ref{fig:Controller} at left shows the optimal controller, which was computed by discretization, pushing $x_a = x_b$ and then maintaining this state.

\begin{figure}[htbp]
\centering
\subfigure[Bang-Bang Control]{\includegraphics[scale=0.4]{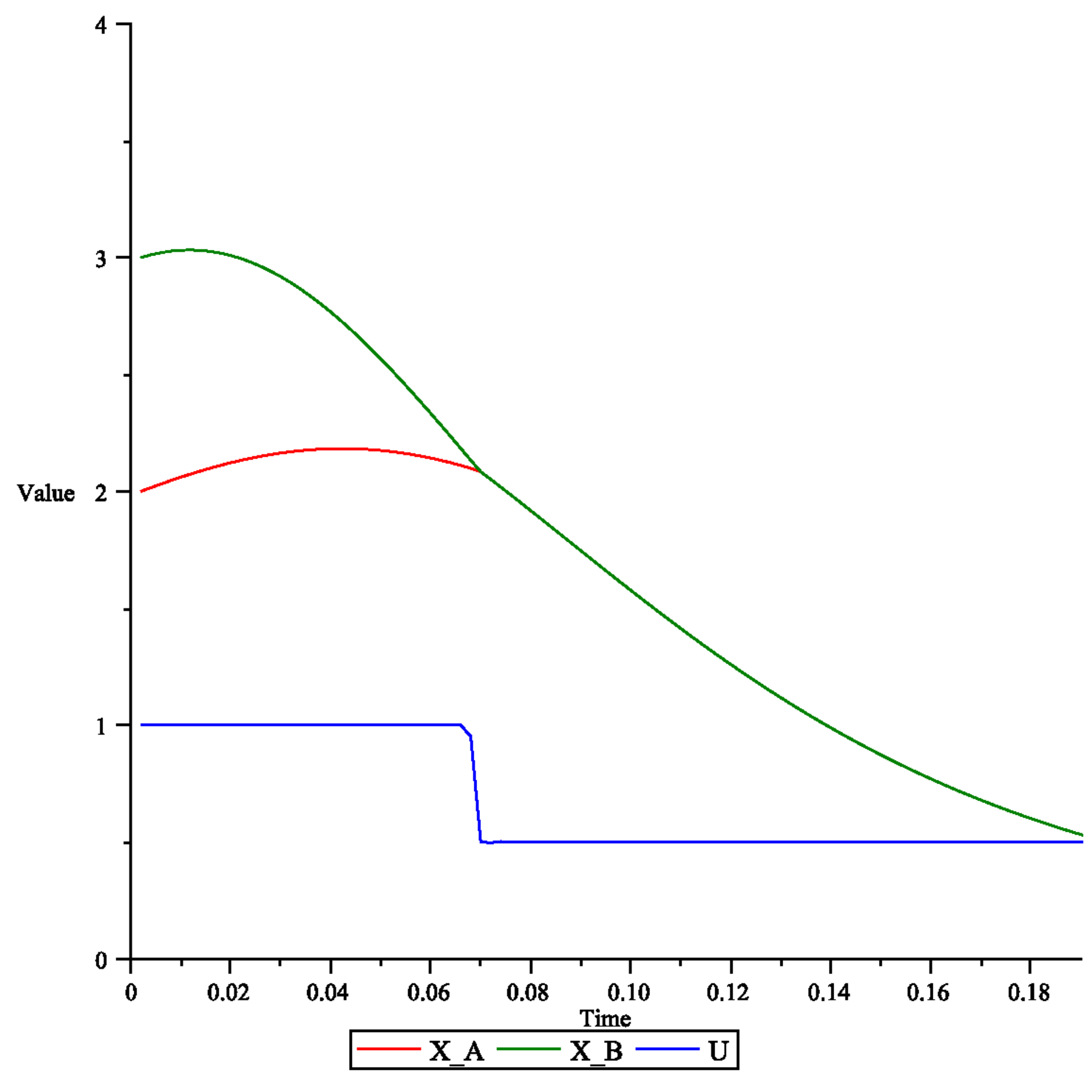}}
\subfigure[Continuous Control]{\includegraphics[scale=0.4]{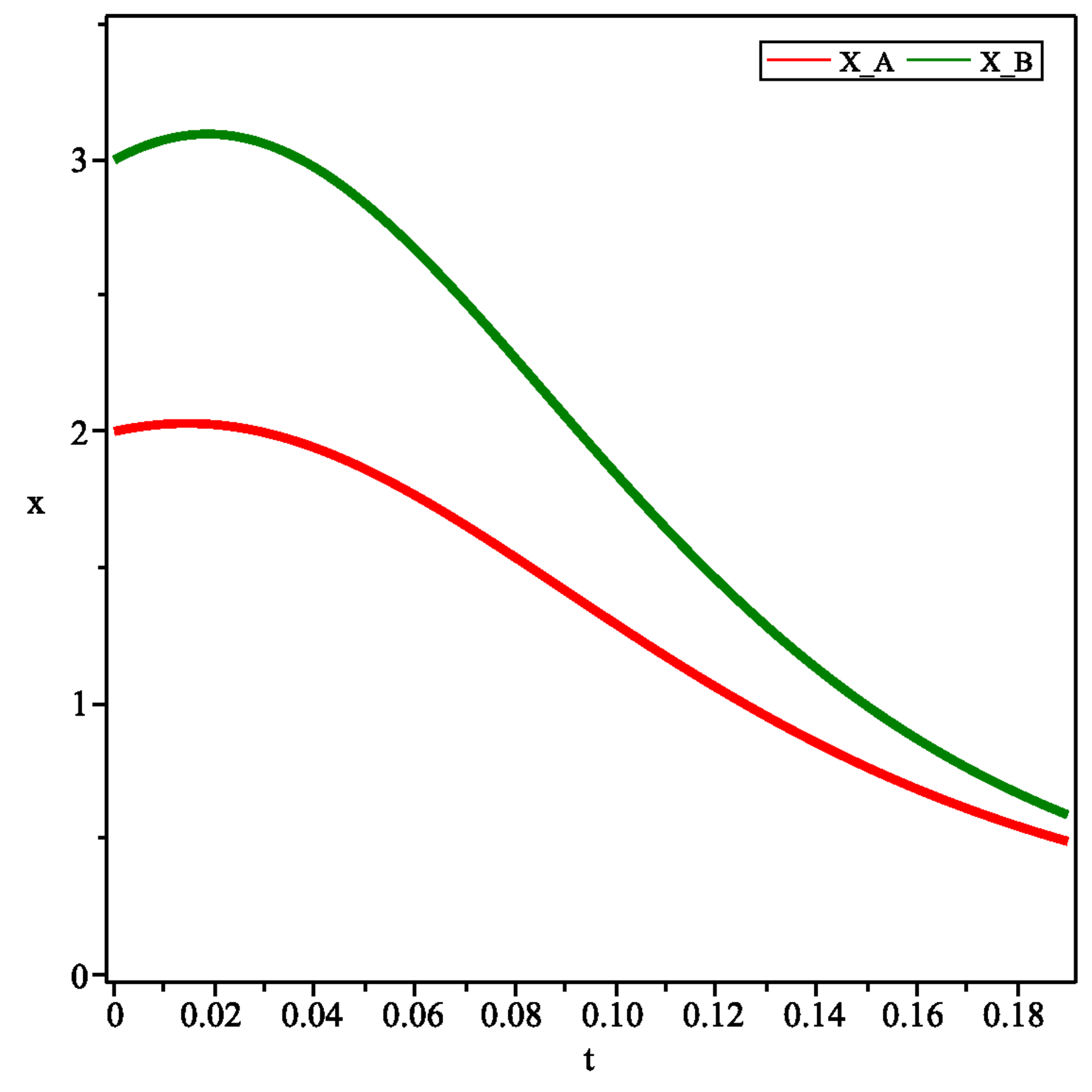}}
\caption{Left: The optimal controller is shown on the bottom, while $x_a(t)$ and $x_b(t)$ are shown above. Note the optimal controller driving $x_a = x_b$. This controller was computed using discretization method. Right: The continuous globally rarest first function drives $x_a$ to to $x_b$ however, the convergence rate is slower than for the true optimal control function.}
\label{fig:Controller}
\end{figure}

We can contrast this to the continuous approximation of the globally rarest first controller given in Expression (\ref{eqn:lrf-cts}). In Figure \ref{fig:Controller} at right we illustrate the effect the continuous globally rarest first controller has on the values of $x_a$ and $x_b$. Note that the two converge much more slowly than in the optimal case.

Note that since Problem (\ref{eqn:OptControl}) is of the Mayer type,
the specific form of the controller will matter most when $T$ is small, 
i.e., when $T$ is much less than the time required to 
reach an attracting equilibrium. 
This is precisely the behavior we see in Figures \ref{fig:Controller}. 
The optimal (bang-bang) controller attempts to drive the system 
to its equilibrium point as quickly as possible since it is here that 
steady-state  component of
Problem (\ref{eqn:OptControl}), given 
by  (\ref{Little-sojourn}),
is minimized. 
It does so by pushing the system to a spot on the diagonal 
(where $x_a = x_b$); since thereafter $u \equiv 1/2$,
the systems with constant control
($u\equiv 1/2$ always), 
discrete globally rarest first control, or continuous globally rarest first 
control, will share equilibrium points 
(this is illustrated in the next section). 
For large values of $T$, the dominant component of the
objective   will be the 
the equilibrium point to which the system tends, not the transient
component  which may be sensitive to the  choice among such controllers.

\section{Equilibria under Continuous Globally Rarest First Control}
\label{sec:equilib}
More interesting equilibria are possible under the continuous form of globally rarest first controller (\ref{eqn:lrf-cts}). Equilibrium analysis focusing on the values of $x_a^*$ and $x_b^*$ (by first solving for and 
substituting out $x_s^*$ and $x_l^*$) yields:
\begin{equation}
x_l^* = \lambda_l{\beta}^{-1} \left( x^*_{{a}}+x^*_{{b}}-{\frac {\lambda_s+2\,\gamma\,x^*_{
{a}}x^*_{{b}}}{\beta\,x^*_{{a}}+\beta\,x^*_{{b}}-\delta}} \right) ^{-1}.
\label{eqn:XLSOLN}
\end{equation}
\begin{equation}
x_s = -{\frac {\lambda_s+2\,\gamma\,x_{{a}}x_{{b}}}{\beta\,x_{{a}}+\beta\,x_{{b}}
-\delta}}
\label{eqn:XSSOLN}
\end{equation}
In the following, 
suppress the superscript ``*" for notational simplicity.

We can solve the simpler simultaneous nonlinear equations
$\dot{x}_a + \dot{x}_b  = 0$ and
$\dot{x}_a - \dot{x}_b  = 0$
to obtain equilibrium solutions for $x_a$ and $x_b$ respectively. From the former, we get:
\begin{equation}
x_a = -{\frac {x_{{b}}\lambda_l\,\beta+\beta\,\lambda_s\,x_{{b}}-\lambda_l\,\delta}{2
\,\gamma\,\delta\,x_{{b}}+\lambda_l\,\beta+\beta\,\lambda_s}}
\label{eqn:XASOLN}
\end{equation}
We can then substitute this expression into $\dot{x}_a - \dot{x}_b=0$ 
to obtain a ratio of polynomials in $x_b$ whose numerator is:
\begin{equation}
-2\left(2\gamma\delta x_b^{2}+ 2\beta(\lambda_l + \lambda_s)x_b-\lambda_l\delta\right)\left[a_0 + a_1x_b + a_2x_b^2 + a_3x_b^3 + a_4x_b^4\right]
\end{equation}
where:
\begin{displaymath}
\left\{
\begin{aligned}
&a_0 = {\lambda_l}^{4}{\beta}^{2}+{\lambda_s}^{3}{\beta}^{2}\lambda_l+3\,\lambda_s\,{\beta}
^{2}{\lambda_l}^{3}+3\,{\lambda_s}^{2}{\beta}^{2}{\lambda_l}^{2}\\
&a_1 = 3\,{\lambda_s}^{2}\lambda_l\,\delta\,\gamma\,\beta-{\lambda_l}^{2}\gamma\,{
\delta}^{3}+6\,\lambda_s\,{\lambda_l}^{2}\delta\,\gamma\,\beta+3\,\delta\,
\gamma\,\beta\,{\lambda_l}^{3}\\
&a_2 = 3\,{\beta}^{2}\lambda_s\,{\lambda_l}^{2}\gamma+3\,{\beta}^{2}{\lambda_s}^{2}
\lambda_l\,\gamma+\gamma\,{\delta}^{2}\lambda_l\,\beta\,\lambda_s+2\,{\gamma}^{
2}\lambda_l\,{\delta}^{2}\lambda_s\\
& \hspace*{3em}+2\,{\gamma}^{2}{\lambda_l}^{2}{\delta}^{2}+
\gamma\,\beta\,{\delta}^{2}{\lambda_l}^{2}+{\beta}^{2}\gamma\,{\lambda_s}^{3}
+{\beta}^{2}\gamma\,{\lambda_l}^{3}\\
&a_3 = 4\,\lambda_s\,\lambda_l\,\delta\,{\gamma}^{2}\beta+2\,{\lambda_s}^{2}\delta\,{
\gamma}^{2}\beta+2\,{\lambda_l}^{2}\delta\,{\gamma}^{2}\beta-2\,{\gamma}
^{2}{\delta}^{3}\lambda_l\\
&a_4 = 2\,{\gamma}^{2}{\delta}^{2}\lambda_l\,\beta+2\,{\gamma}^{2}{\delta}^{2}\beta\,\lambda_s
\end{aligned}\right.
\end{displaymath}
The roots of 
$\dot{x}_a - \dot{x}_b$ are governed by the roots of the quadratic polynomial:
\begin{equation}
2\gamma\delta x_b^{2}+ 2\beta(\lambda_l + \lambda_s)x_b-\lambda_l\delta
\label{eqn:quad}
\end{equation}
and a quartic polynomial:
\begin{equation}
a_0 + a_1x_b + a_2x_b^2 + a_3x_b^3 + a_4x_b^4
\label{eqn:quart}
\end{equation}

\subsection{Quadratic Equation Case}\label{quad-sec}
\begin{proposition} 
A unique strictly positive, real solution with $x_a=x_b$ exists for
(\ref{eqn:XASOLN}) and (\ref{eqn:quad}).
\label{claim:UniqueQuad}
\end{proposition}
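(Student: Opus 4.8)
The plan is to show that imposing $x_a=x_b$ in (\ref{eqn:XASOLN}) collapses the pair of conditions to the single quadratic (\ref{eqn:quad}), and then to count the positive roots of that quadratic.

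First I would substitute $x_a=x_b=:x$ into (\ref{eqn:XASOLN}). Because all rate parameters $\beta,\gamma,\delta,\lambda_l,\lambda_s$ are strictly positive, the denominator $2\gamma\delta x+\beta(\lambda_l+\lambda_s)$ in (\ref{eqn:XASOLN}) is strictly positive for every $x\ge 0$, so clearing it neither creates nor destroys roots. Multiplying through by the denominator and collecting terms, the linear contributions combine (rather than cancel) to $2\beta(\lambda_l+\lambda_s)x$, and what remains is exactly $2\gamma\delta x^{2}+2\beta(\lambda_l+\lambda_s)x-\lambda_l\delta=0$, i.e. (\ref{eqn:quad}). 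Hence a strictly positive solution of $\{$(\ref{eqn:XASOLN}), $x_a=x_b\}$ is the same thing as a strictly positive root of (\ref{eqn:quad}), and along it $x_a=x_b>0$ automatically.

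Next I would count the positive roots of (\ref{eqn:quad}). Writing it as $ax^2+bx+c$ with $a=2\gamma\delta>0$, $b=2\beta(\lambda_l+\lambda_s)>0$, $c=-\lambda_l\delta<0$, the product of the two roots is $c/a=-\lambda_l/(2\gamma)<0$, so exactly one root is positive and one is negative (equivalently, Descartes' rule of signs gives the single sign change $\Rightarrow$ one positive real root). The discriminant $b^2-4ac=4\beta^2(\lambda_l+\lambda_s)^2+8\gamma\delta^2\lambda_l>0$ shows the roots are real and distinct, so the positive one is unique; explicitly $x_b=\bigl(-\beta(\lambda_l+\lambda_s)+\sqrt{\beta^2(\lambda_l+\lambda_s)^2+2\gamma\delta^2\lambda_l}\,\bigr)/(2\gamma\delta)$. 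Combined with the first step, this is the asserted unique strictly positive solution.

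The argument is elementary, so there is no serious obstacle; the one place to be careful is the reduction in the first step, where I must verify that clearing the denominator is legitimate (guaranteed by positivity of the parameters, which keeps the denominator bounded away from $0$ on $x\ge0$) and that the linear terms genuinely add to $2\beta(\lambda_l+\lambda_s)x$ instead of cancelling — a sign slip there would change the conclusion. I would also add one sentence noting that this proposition only concerns roots of the reduced system $\{$(\ref{eqn:XASOLN}),(\ref{eqn:quad})$\}$; whether the induced $x_s$ from (\ref{eqn:XSSOLN}) and $x_l$ from (\ref{eqn:XLSOLN}) are themselves nonnegative, i.e.\ whether this yields a feasible equilibrium of (\ref{eqn:SystemModel}), is a separate matter not claimed here.
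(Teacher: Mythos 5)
Your proof is correct and follows essentially the same elementary route as the paper: both arguments come down to showing that the quadratic (\ref{eqn:quad}) has exactly one strictly positive real root and that the corresponding solution lies on the diagonal $x_a=x_b$. The only difference is organizational --- the paper first extracts the positive root of (\ref{eqn:quad}) via the quadratic formula and then verifies by substitution into (\ref{eqn:XASOLN}) that $x_a=x_b$ there (a step it leaves as ``simplifying algebraically''), whereas you impose $x_a=x_b$ in (\ref{eqn:XASOLN}) first and derive (\ref{eqn:quad}) directly, which renders that unshown simplification unnecessary.
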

\begin{proof} The roots of (\ref{eqn:quad}) are given by:
\begin{equation}
\frac{-1}{4\gamma\delta}\left(2\beta(\lambda + \rho) \pm 
2 \sqrt{\beta^2(\lambda + \rho)^2 + 2\lambda\gamma\delta^2} \right)
\end{equation}
The fact that all parameters are positive yields the expression:
\begin{equation}
2\beta(\lambda + \rho) = 2\lvert\sqrt{\beta^2(\lambda + \rho)^2}\rvert < 
2\lvert \sqrt{\beta^2(\lambda + \rho)^2 + 2\lambda\gamma\delta^2}\rvert
\end{equation}
Thus, 
\begin{equation}
2\beta(\lambda + \rho) - 
2 \sqrt{\beta^2(\lambda + \rho)^2 + 2\lambda\gamma\delta^2} < 0
\end{equation}
while 
\begin{equation}
2\beta(\lambda + \rho) + 
2 \sqrt{\beta^2(\lambda + \rho)^2 + 2\lambda\gamma\delta^2} > 0
\end{equation}
The factor of $-1/4\gamma\delta$ leads us to conclude that there is one positive root and one negative root always for (\ref{eqn:quad}). Thus:
\begin{equation}
x_b^* = \frac{-1}{4\gamma\delta}\left(2\beta(\lambda + \rho) - 
2 \sqrt{\beta^2(\lambda + \rho)^2 + 2\lambda\gamma\delta^2} \right)
\end{equation}
is a non-extraneous equilibrium solution. Substituting this value in the expression for $x_a$ and simplifying algebraically yields:
\begin{equation}
x_a^* = \frac{-1}{4\gamma\delta}\left(2\beta(\lambda + \rho) - 
2 \sqrt{\beta^2(\lambda + \rho)^2 + 2\lambda\gamma\delta^2} \right)
\end{equation}
as well. That is, the two roots are equal and positive. This completes the proof.
\end{proof}

\subsection{Quartic Equation Case}
The roots of the quadratic equation (\ref{eqn:quad}) just considered are not necessarily those of the quartic equation (\ref{eqn:quart}). We make use of the following known
result (see, e.g.,  Theorem 1 of \cite{quartic-roots-ref}): 
\begin{theorem} For the quartic equation,
$f(x) = c_0 + 4c_1x + 6c_2x^2 + 4c_3x^3 + c_4x^4$,
define the following terms:
$G = c_4^2c_1-3c_4c_3c_2+2c_3^3$,
$H = c_4c_2-c_3^2$,
$I = c_4c_0-4c_3c_1+3c_2^2$,
\begin{displaymath}
J = \left\lvert
\begin{array}{ccc}
c_4 & c_3 & c_2\\
c_3 & c_2 & c_1\\
c_2 & c_3 & c_4
\end{array}
\right\rvert,
\end{displaymath}
and the discriminant $\Delta = I^3 - 27J^2$.
Then $f(x) = 0$ has no real roots if and only if:
\begin{enumerate}
\item $\Delta = 0$, $G=0$, $12H^2-c_4I=0$ and $H > 0$; or
\item $\Delta > 0$ and
 (a) $H\geq 0$; or
 (b) $H < 0$ and $12H^2 - c_4^2I < 0$
\end{enumerate}\hfill\qquad\endproof
\label{thm:Roots}
\end{theorem}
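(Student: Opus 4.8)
This statement is the classical root-classification of a real binary quartic by its invariants, so one legitimate option is simply to invoke the cited source (or Burnside and Panton's \emph{Theory of Equations}); I outline below how a self-contained proof would run. Throughout assume $c_4\neq 0$, since the statement concerns a genuine quartic (in the application $c_4=a_4>0$ anyway). Dividing by $c_4$ and applying the Tschirnhaus shift $x\mapsto x-c_3/c_4$ yields a depressed quartic $g(t)=t^{4}+pt^{2}+qt+R$ whose real roots correspond bijectively to those of $f$. A direct expansion gives $p=6H/c_4^{2}$, so the sign of the $t^{2}$-coefficient is the sign of $H$; since $I$ and $J$ are invariants of the binary form (and $\Delta=I^{3}-27J^{2}$ its discriminant) they are unchanged by the shift, up to the overall rescaling, which gives $R=I/c_4^{2}-p^{2}/12$ and hence $p^{2}-4R=(4/c_4^{4})(12H^{2}-c_4^{2}I)$. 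Thus the two auxiliary quantities appearing in the theorem, namely $H$ and $12H^{2}-c_4^{2}I$ (homogeneity forces the $c_4I$ of clause~(1) to be read as $c_4^{2}I$), are positive multiples of the quantities $p$ and $p^{2}-4R$ attached to the depressed quartic, and $q$ is a multiple of the seminvariant $G$, so $G=0$ iff $q=0$.

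Next I would run Ferrari's method: the resolvent cubic $R_{3}(z)=z^{3}-2pz^{2}+(p^{2}-4R)z+q^{2}$ has the same discriminant $\Delta$ as $g$ (up to a positive factor), and for a root $z$ of $R_{3}$ the quartic $g$ factors into two real quadratics. Since the roots of $R_{3}$ are exactly $-(t_i+t_j)^{2}$ over the three ways of pairing the roots $t_1,\dots,t_4$ of $g$, the quartic has all four roots real iff $R_{3}$ has three real roots, all $\le 0$. By Vieta the product of the roots of $R_{3}$ equals $-q^{2}\le 0$, which rules out the ``one negative, two positive'' configuration once the sum ($=2p$) is $\le 0$ and the sum of pairwise products ($=p^{2}-4R$) is $\ge 0$; hence $g$ has all four roots real iff $\Delta\ge 0$, $p\le 0$ and $p^{2}-4R\ge 0$, i.e. iff $\Delta\ge 0$, $H\le 0$ and $12H^{2}-c_4^{2}I\ge 0$.

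It remains to read off the three regimes. If $\Delta<0$ the resolvent cubic has one real and two complex roots, so $g$ (hence $f$) has exactly two real roots, contributing nothing to the ``no real roots'' list. If $\Delta>0$ all roots are simple and $g$ has four or zero real roots; it has zero real roots precisely when the criterion of the previous paragraph fails, i.e. when $H\ge 0$, or $H<0$ together with $12H^{2}-c_4^{2}I<0$ — which is clause~(2); the inequality in (2)(b) may be taken strict because $p<0$ with $p^{2}-4R=0$ would force $\Delta\le 0$. If $\Delta=0$, $g$ has a repeated root, and having no real root forces the repeated root to be one member of a complex-conjugate pair, so $f=c_4\big((x-a)^{2}+b^{2}\big)^{2}$ with $b\ne 0$; depressing, $g=(t^{2}+b^{2})^{2}$, i.e. $q=0$, $p^{2}=4R$ and $p>0$, which translates to $G=0$, $12H^{2}-c_4^{2}I=0$ and $H>0$ — clause~(1). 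Assembling the three cases gives the asserted equivalence.

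The work here is bookkeeping rather than conceptual: one has to verify carefully that the conditions produced on the depressed-quartic coefficients $(p,q,R)$ transport, under the shift and the rescaling by $c_4$, into exactly the invariant/seminvariant conditions on $(G,H,I,J)$ stated in the theorem, and in particular that every strict-versus-nonstrict inequality and every degenerate sub-case of $\Delta=0$ lands correctly; this is the only delicate part. (An alternative avoiding the resolvent cubic: count real roots via the signature of the $4\times4$ Hermite form built from the Newton power sums $s_0,\dots,s_6$ of $f$ — ``no real roots, all roots distinct'' is signature zero, and its leading principal minors are again polynomials in $G,H,I,J$ — but the resulting case analysis is of comparable length.)
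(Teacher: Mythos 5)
Your proposal is correct in substance, and its first move --- simply invoking the cited source --- is in fact exactly what the paper does: Theorem~\ref{thm:Roots} is introduced as ``a known result (see, e.g., Theorem 1 of the cited reference)'' and carries no proof whatsoever, the statement being closed immediately with a QED mark. Everything beyond the citation in your write-up is therefore additional to the paper, and it is a sound outline of the standard argument: depress the quartic, identify $p$, $q$ and $p^{2}-4r$ with positive multiples of the seminvariants $H$, $G$ and $12H^{2}-c_4^{2}I$, and read off the real-root count from the sign of $\Delta$ together with the resolvent cubic. Two observations of yours deserve emphasis because they amount to corrections of the statement as printed. First, you are right that clause (1) must read $12H^{2}-c_4^{2}I=0$ rather than $12H^{2}-c_4I=0$; the printed condition is inhomogeneous and your degree-counting argument is the proper fix. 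Second (which you use implicitly but do not flag), the determinant $J$ is also misprinted: the bottom row should be $(c_2,c_1,c_0)$, giving the usual catalecticant $\det\bigl(\begin{smallmatrix} c_4&c_3&c_2\\ c_3&c_2&c_1\\ c_2&c_1&c_0\end{smallmatrix}\bigr)$, since $\Delta=I^{3}-27J^{2}$ is the discriminant only for that $J$. The one place your outline is thinner than a complete proof is the step ``all roots of the resolvent cubic are $\le 0$''; the clean way to close it is to note that a real cubic with all real roots has no positive root if and only if all coefficients of its monic form are nonnegative, which here reduces exactly to $p\le 0$ and $p^{2}-4r\ge 0$ because the constant term is $q^{2}\ge 0$. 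That, and your justification of the strict inequality in clause (2)(b) via $\Delta=q^{2}(32p^{3}-27q^{2})\le 0$ when $r=p^{2}/4$ and $p<0$, both check out, so the remaining work really is the bookkeeping you already acknowledged.
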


For (\ref{eqn:quart}) we can evaluate the discriminant and determine conditions on $\lambda_s$ when $\Delta < 0$, which will create a real-root for (\ref{eqn:quart}). The sign of the discriminant is governed by a quadratic expression on $\lambda_s$. We identify terms:
\begin{equation}
\lambda_{0,1}  =  
\frac {(\eta\xi+1)^2}{4\gamma\,{\xi}^{3}\eta}\left[
 10\,\beta\,\gamma+{\gamma}^{2} ~ \pm\sqrt {68\,{\beta}
^{2}{\gamma}^{2}+20\,\beta\,{\gamma}^{3}+{\gamma}^{4}+64\,\gamma\,{
\beta}^{3}}\right]
\label{lambda_k-def}
\end{equation}
Given the fact that the parameters are always positive, the square root is always real and thus for any set of parameters, we can identify a condition under which $\Delta = 0$. Let $\lambda_0$ and $\lambda_1$ denote the two roots defined above. Evaluating a point directly between the two roots yields:
\begin{equation}
\lambda_s^{+} = {\frac { \left( \eta\,\xi+1 \right) ^{2} \left( \gamma+10\,\beta
 \right) }{4{\xi}^{3}\eta}}
\end{equation}
Evaluating $\Delta$ at $\lambda_s^+$ yields the simple expression:
\begin{equation}
\Delta(\lambda_s^+) = \frac{1}{8}\left( \eta\,\xi+1 \right) ^{4} \left( \gamma+16\,\beta \right) 
 \left( \gamma+2\,\beta \right) ^{2}
\end{equation}

Since all parameters are positive, we can see that when $\lambda_s \in (\lambda_0,\lambda_1)$, then $\Delta > 0$. If we choose $\lambda_s$ outside of this range and fix $\beta$, $\gamma$, $\eta$ and $\xi$ we can identify the example off-diagonal equilibrium solutions illustrated in the main text.

For the case:   
\begin{displaymath}
\lambda_l \geq \delta \geq \lambda_s,
\end{displaymath}
let $\eta=\lambda_l/\delta$ and $\xi=\delta/\lambda_s$ with $\eta, \xi\geq 1$. The quartic discriminant $\Delta$ will depend on a quadratic form in $\lambda_s$ with roots at $\lambda_0$ and $\lambda_1$ and $\lambda_0<\lambda_1$. Between these values, the quartic equation has no real roots. Outside these values, the quartic has at least two real roots which correspond to equilibrium points for $x_a$ and $x_b$ that are off-diagonal (i.e., $x_a \neq x_b$).

\subsection{Off-Diagonal Equilibria}
\label{sec:off-diag}
Real solutions to the quartic equation are interesting because they lead to off-diagonal equilibrium solutions in cases where the values of the parameters are widely skewed. For the case $\beta = 2$, $\gamma = 3$, $\eta = 1.1$, $\xi = 1.1$ we obtain:
\begin{equation}
\lambda_0 = -0.759  \text{ and } \lambda_1 = 39.121.
\end{equation}
Choosing $\lambda_s = 40>\lambda_1$, so that $\delta = 44$ and $\lambda_l = 48.4$. The resulting (real) off-diagonal equilibrium points $(x_a^*,x_b^*)$ derived with these parameters in the quartic are:
\begin{displaymath}
(5.237,~0.772) \text{ and } (0.772,~5.237).
\end{displaymath}
This example illustrates the existence of off-diagonal equilibrium points. A field plot of this case is shown in Figure \ref{fig:OffDiagonal} (Left). The blue lines are the trajectories of the dynamical system with representative starting points. Off-diagonal equilibrium points are shown as black diamonds.

It is interesting to note that the off-diagonal equilibrium points partition the phase plane into a  central region of stability flanked by two regions of instability \cite{Packard10}. Thus, when parameter values are highly skewed, the on-diagonal equilibrium point is not a global attractor as it appears to be when it is the unique non-extraneous equilibrium (see Figure \ref{fig:OffDiagonal} on Left). The trajectories illustrate a component of this region of attraction. It is clear that these equilibrium points \textit{should not} occur in the bang-bang control case -- evaluating Little's formula for the on-diagonal equilibrium point in this case shows that the on-diagonal equilibrium correctly minimizes the objective function of the control problem (\ref{eqn:OptControl}). This is discussed in the next section.

The results shown on the equilibrium points of the continuous locally rarest first control are summarized in the following theorem.

\begin{theorem} For the dynamics given in Expression (\ref{eqn:SystemModel})
under (\ref{eqn:lrf-cts}), 
there is alway at least one point of equilibrium occurring at:
\begin{gather*}
x_s^* = \frac{\lambda_l + \lambda_s}{\delta}\\
x_l^* = \frac{\lambda_l\gamma\delta}
{\beta\left((\lambda_l + \lambda_s)(\gamma-\beta) + \sqrt{\beta^2(\lambda_l+\lambda_s)^2 + 2\lambda_l\gamma\delta^2}\right)}\\
x_a^* = x_b^* = \frac{-1}{4\gamma\delta}\left(2\beta(\lambda_l + \lambda_s) - 
2 \sqrt{\beta^2(\lambda_l + \lambda_s)^2 + 2\lambda_l\gamma\delta^2} \right)
\end{gather*} 
Furthermore, if $\lambda_s \not \in [\lambda_0,\lambda_1]$ and $\lambda_s>0$, with $\lambda_0$ and $\lambda_1$ defined in Expression (\ref{lambda_k-def}), then the system may admit at least one other equilibrium point; in this case, $x_l^*$ and $x_s^*$ remain as defined, but $x_a^*$ and $x_b^*$ may take on non-equal values. 
\end{theorem}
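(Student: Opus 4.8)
The plan is to handle the two assertions in turn: first establish the stated (on-diagonal) equilibrium for \emph{every} parameter choice, then extract the additional equilibria from the quartic factor when $\lambda_s\notin[\lambda_0,\lambda_1]$. The starting move is to add the four equations of (\ref{eqn:SystemModel}), replacing the equilibrium system by the equivalent one $\dot x_l=0$, $\dot x_a+\dot x_b=0$, $\dot x_a-\dot x_b=0$, $\dot x_l+\dot x_a+\dot x_b+\dot x_s=0$. Under the control (\ref{eqn:lrf-cts}) every bilinear and trilinear term in the last equation cancels: the $u x_s$ and $(1-u)x_s$ contributions to $\dot x_a$ and $\dot x_b$ sum to $\beta x_l x_s$, which — with the $\beta x_l x_a$ and $\beta x_l x_b$ terms — exactly cancels $-\beta x_l(x_a+x_b+x_s)$ in $\dot x_l$, while the $\beta x_s$- and $\gamma$-terms cancel in pairs against $\dot x_s$. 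Hence $\tfrac{d}{dt}(x_l+x_a+x_b+x_s)=\lambda_l+\lambda_s-\delta x_s$, so at equilibrium $x_s^*=(\lambda_l+\lambda_s)/\delta$ independent of the control, and then $\dot x_l=0$ forces $x_l^*=\lambda_l/[\beta(x_a^*+x_b^*+x_s^*)]$, which is (\ref{eqn:XLSOLN}).

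Next I would compute $\dot x_a\pm\dot x_b$ with $u=x_b/(x_a+x_b)$. Using $\dot x_l=0$ the sum collapses to $\dot x_a+\dot x_b=\lambda_l-\beta x_s(x_a+x_b)-2\gamma x_a x_b$, and substituting $x_s^*$ and solving for $x_a$ reproduces (\ref{eqn:XASOLN}); the difference factors cleanly as $\dot x_a-\dot x_b=\beta(x_a-x_b)\bigl[x_l-x_s-\tfrac{x_l x_s}{x_a+x_b}\bigr]$. Feeding the on-diagonal factor $x_a=x_b$ into (\ref{eqn:XASOLN}) gives exactly the quadratic (\ref{eqn:quad}), and Proposition \ref{claim:UniqueQuad} then supplies its unique strictly positive root — the asserted $x_a^*=x_b^*$. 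Substituting $x_a^*=x_b^*$ and $x_s^*$ into $x_l^*=\lambda_l/[\beta(x_a^*+x_b^*+x_s^*)]$ and simplifying (along the way $x_a^*+x_b^*+x_s^*=\tfrac{1}{\gamma\delta}\bigl(\sqrt{\beta^2(\lambda_l+\lambda_s)^2+2\lambda_l\gamma\delta^2}+(\gamma-\beta)(\lambda_l+\lambda_s)\bigr)$) yields the claimed $x_l^*$. All four coordinates are positive — $\gamma\ge\beta$ keeps the denominator of $x_l^*$ positive, and Proposition \ref{claim:UniqueQuad} handles $x_a^*=x_b^*$ — so this is a genuine nonnegative fixed point.

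For the second assertion I would keep the other (off-diagonal) factor, $x_l-x_s-\tfrac{x_l x_s}{x_a+x_b}=0$; eliminating $x_l,x_s$ via $\dot x_l=0$ and $x_s^*$ and then substituting (\ref{eqn:XASOLN}) turns the equilibrium condition on $x_b$ into the quartic (\ref{eqn:quart}). By Theorem \ref{thm:Roots} the quartic can fail to have a real root only in cases all of which force $\Delta\ge0$, whereas the discriminant computation already performed shows $\operatorname{sign}\Delta$ equals the sign of a quadratic in $\lambda_s$ with roots $\lambda_0<\lambda_1$ from (\ref{lambda_k-def}) and value $\Delta(\lambda_s^+)=\tfrac{1}{8}(\eta\xi+1)^4(\gamma+16\beta)(\gamma+2\beta)^2>0$ at the midpoint $\lambda_s^+=(\lambda_0+\lambda_1)/2$. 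Being positive at its vertex with two real roots, that quadratic opens downward, so $\Delta(\lambda_s)<0$ for every $\lambda_s\notin[\lambda_0,\lambda_1]$, and the quartic then has (two) real roots. A positive such root $\tilde x_b$ is generically not a root of the quadratic (\ref{eqn:quad}), so (\ref{eqn:XASOLN}) returns an $\tilde x_a\neq\tilde x_b$; with $x_s^*=(\lambda_l+\lambda_s)/\delta$ unchanged and $\tilde x_l$ still read from (\ref{eqn:XLSOLN}), this is a further equilibrium. That positive, genuinely off-diagonal configurations do occur I would close by invoking the worked parameter set of Section \ref{sec:off-diag}, which yields the equilibria $(x_a^*,x_b^*)=(5.237,0.772)$ and $(0.772,5.237)$.

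The conceptual steps above are short; the real labor is the bookkeeping behind ``the discriminant computation already performed'' — expanding $\Delta=I^3-27J^2$ from Theorem \ref{thm:Roots} with $c_0,\dots,c_4$ read off from the $a_i$ and checking that it factors as a strictly positive quantity times the advertised quadratic in $\lambda_s$ — which I would cite rather than redo. The other point needing care is candor about the hedged ``may'': for $\lambda_s\notin[\lambda_0,\lambda_1]$ the quartic certainly has real roots, but their positivity (hence a genuine off-diagonal equilibrium with $x_a,x_b>0$) is not automatic, which is exactly why the statement claims only possible existence and why the explicit numerical example is needed to finish.
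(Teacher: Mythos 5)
Your proposal is correct and follows the paper's own route essentially step for step: the same reduction of the equilibrium conditions to $\dot x_a+\dot x_b=0$ and $\dot x_a-\dot x_b=0$, the same quadratic/quartic factorization handled by Proposition \ref{claim:UniqueQuad} and the discriminant criterion of Theorem \ref{thm:Roots}, and the same appeal to the worked numerical example to witness genuinely positive off-diagonal equilibria. Your two refinements --- obtaining $x_s^*=(\lambda_l+\lambda_s)/\delta$ by summing all four equations so that every bilinear term cancels, and the explicit factorization $\dot x_a-\dot x_b=\beta(x_a-x_b)\bigl[x_l-x_s-\tfrac{x_l x_s}{x_a+x_b}\bigr]$ that explains \emph{why} the quadratic factor reappears in the difference equation --- are correct and tidier than the paper's brute-force polynomial substitution, but they do not constitute a different argument.
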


We illustrate the on diagonal equilibrium that always exists in Figure \ref{fig:OffDiagonal} (Left and Right)  for the case when $\beta = 2$, $\gamma = 3$, $\lambda_s = 1$, $\lambda_l = 4$ and $\delta = 2$. The black line shows a representative sample path for this dynamical system. 
\begin{figure}[htbp]
\centering
\subfigure[Off-Diagonal Equilibrium]{\includegraphics[scale=0.4]{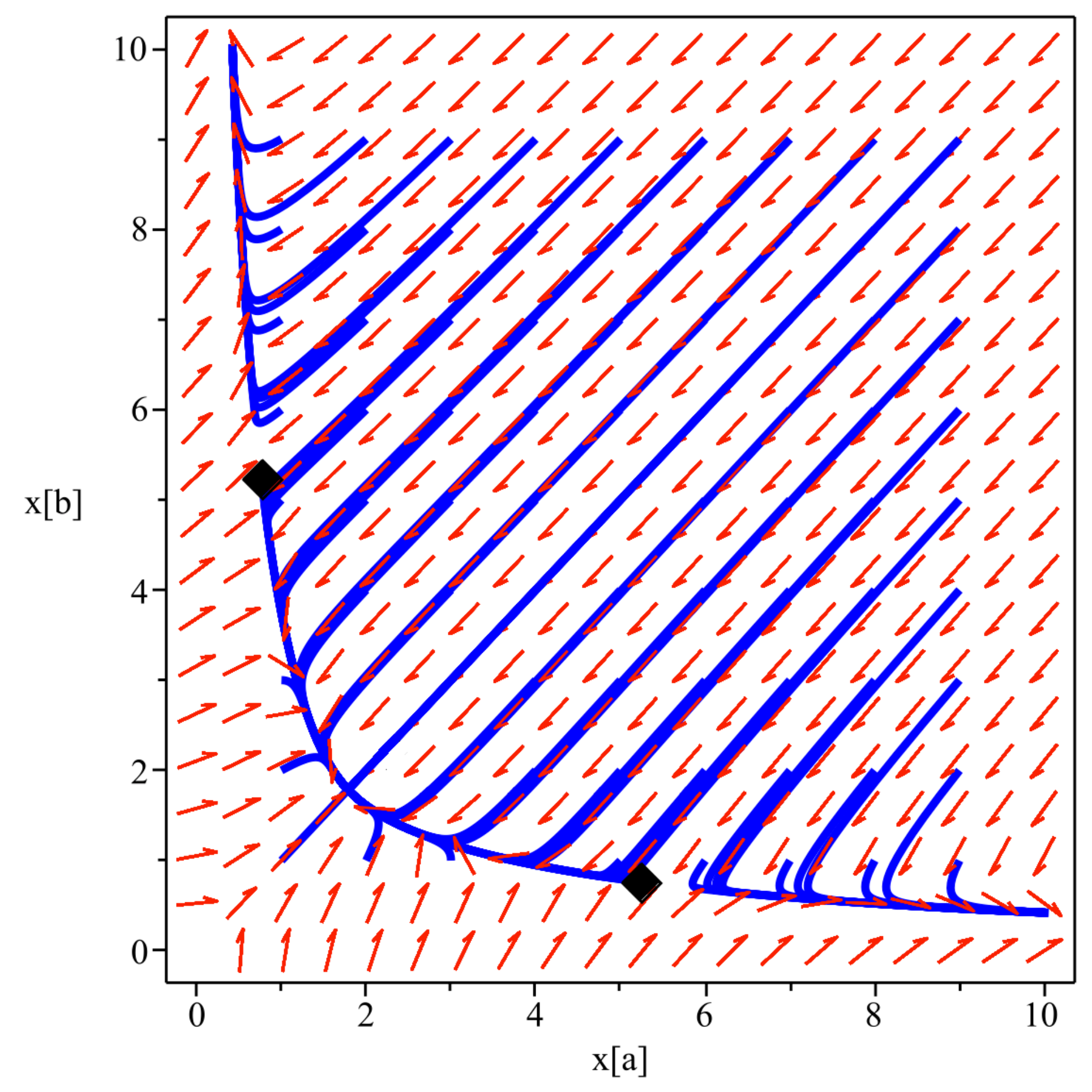}}
\subfigure[On Diagonal Equilibrium]{\includegraphics[scale=0.4]{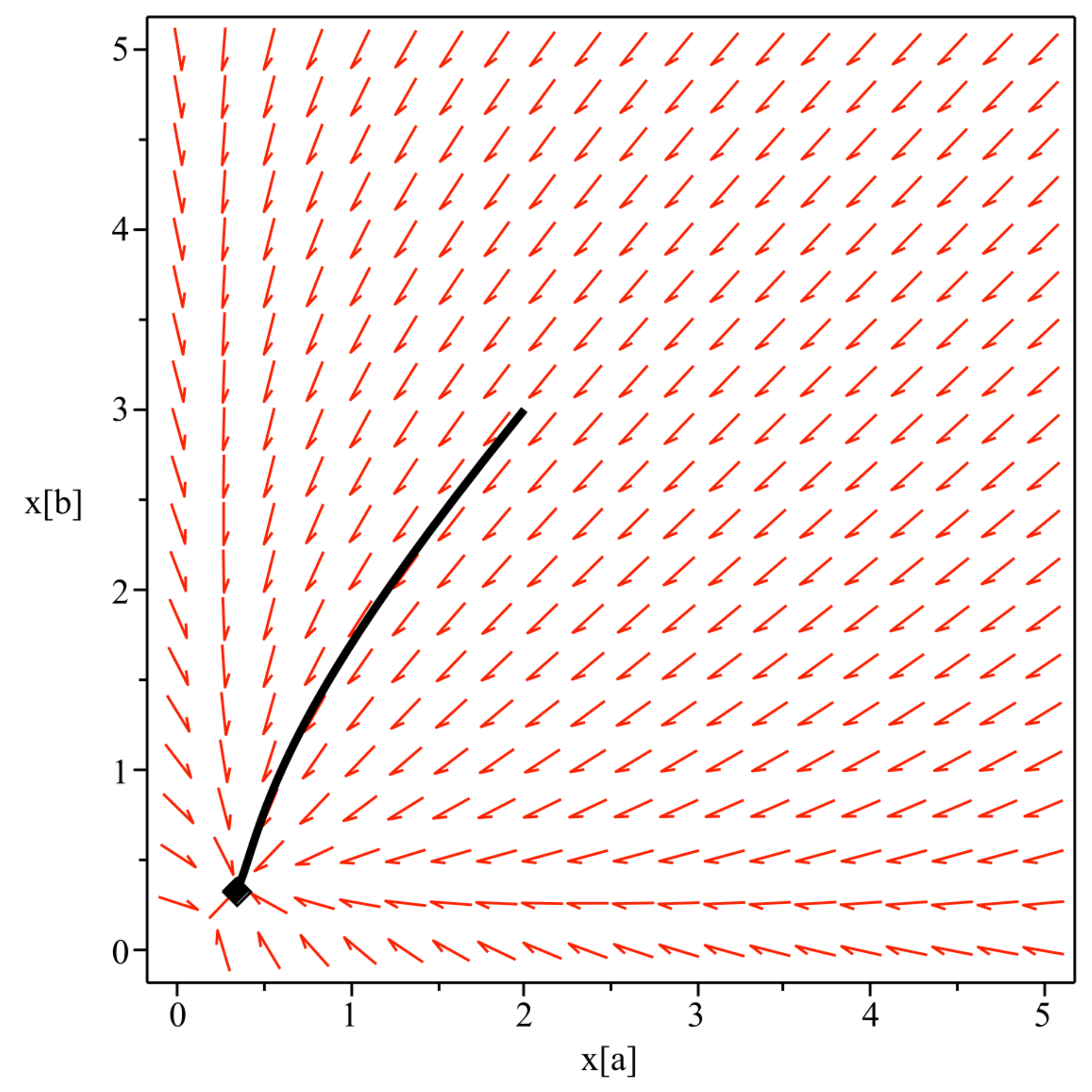}}
\caption{(Left) Phase plot in $(x_a, x_b)=(x,y)$ space showing regions of stability and instability with off-diagonal equilibrium points. (Right) Phase plot in $(x_a, x_b)=(x,y)$ space showing the stability of the on-diagonal equilibrium point. The black line shows a representative sample path for this dynamical system.}
\label{fig:OffDiagonal}
\end{figure}

\section{Discussion: Controlling segment rarity}
\label{sec:controlled-rarity}
Controlled rarity is the action of keeping a collectible object from a set intentionally rare. 
Controlled rarity is used to sell packs of trading cards, i.e.,  collectors buy additional packs seeking the rare card to complete their set. In BitTorrent, the rarity of certain segments can be deliberately controlled to increase swarm sojourn time and encourage additional cooperative (uploading) behavior by leechers. This can serve to stabilize the swarm and prevent collapse.

As a simple example in the two-segment case,  the seeders can use $u(x_a,x_b)  = x_a/(x_a + k x_b)$ for some constant $k$. Clearly, as  $k>1$  increases, the seeders will increasingly tend to disseminate segment $b$ even when  segment $a$ is rarer in the swarm. Based on the results of the previous section, the sojourn time from  leecher to seeder is larger under any such rule with compared to the globally rarest first rule (\ref{eqn:lrf-bang-bang}). At an extreme, seeders could substitute $1-u$ for $u$ 
given in (\ref{eqn:lrf-bang-bang}) to extend leecher sojourn times.

Heretofore, we have described three control policies, (\ref{eqn:lrf-cts}), (\ref{eqn:lrf-bang-bang}), and $u\equiv 1/2$. We note that these three control policies share equilibrium points with $x_a^*=x_b^*$. But for (\ref{eqn:lrf-cts}), we computed off-diagonal equilibria  with $x_a^*+x_b^* = 5.237+0.772= 6.009$. However, for this example, the on-diagonal equilibria are $x_a^*=x_b^* = 2.248$, i.e., $x_a^*+x_b^* = 4.496$  which is less than that of the off-diagonal equilibria. So, in this way, we see that, even in  a stationary regime, control (\ref{eqn:lrf-cts}) may still lead to longer leecher sojourn times (through Little's Formula) than the optimal bang-bang control (\ref{eqn:lrf-bang-bang}). 

Our model allows us to devise beneficial delaying strategies in terms of reduced load for the permanent seeders, overall content availability, and performance--at least for a percentage of the leechers, since clearly those that will be delayed will not have any gain for the specific swarm. We leave this for future work.

\section{Jointly modeling uplink-based choking and rare segments with a lumped non-rare segment model}\label{rare-uplink-sec}

In this section, we reinterpret the two-segment model as representing a general BitTorrent swarm with one segment intentionally rare. That is, all other (non-rare) segments are lumped together by assuming that their collective acquisition
occurs at the same time-scale as that of the single rare segment. 
Thus, the four types of players based on the segments they possess are: 
leechers, seeders, players with all but the rarest segment 
(i.e., $N-1$ segments), and players with the rare segment. 

Let $\beta_{k}$ ($k\in \{r,N-1\}$) be the probability of successful transaction,
again assuming contact/attempts are proportional to the probability of the populations. The probability $\beta_r$ for transactions involving the rare segment are such that $\beta_r \leq \beta_{N-1}$, where
$\beta_{N-1}$ is the lumped parameter corresponding to  peers
with all $N-1$ other (non-rare) segments.

As a result, we get the following epidemic dynamics as a variation of those considered above assuming all uplinks are the same:
\begin{align}
\dot{x}_l  = & \lambda_l -(\beta_r x_r +\beta_{N-1}x_{N-1}
+ [u\beta_r + (1-u)\beta_{N-1}]x_s) x_l\\
\dot{x}_r  = &  \beta_r (u x_s +x_r)x_l -(\beta_{N-1}x_s 
+\beta_r x_{N-1})x_r\\
\dot{x}_{N-1}  = &  \beta_{N-1} ((1-u) x_s +x_{N-1})x_l 
-\beta_r(x_s + x_r)x_{N-1}\\
\dot{x}_s  = & \lambda_s +\beta_{N-1}x_s x_r +\beta_r(2x_r+x_s)x_{N-1}
-\delta x_s 
\end{align}
where we require $\lambda_s,\lambda_l > 0$ to prevent extinction. Also, consider how the seeder can control the system by varying $u$ governing seeder contact with leechers (whether a rare segment is
shared).  For a simple example,
take $\beta_r=1=\beta_{N-1}$, $\lambda_l=1$, $\lambda_s=0.01$ and $\delta=0.1$. In Figure \ref{no-choke-fig}, we plot as a function of $u$
the mean (i.e., equilibrium, $\dot{\underline{x}}=0$) delay from leecher to seeder, which is by Little's theorem \cite{Wolff89}
\begin{eqnarray*}
\frac{x_l + x_r + x_{N-1}}{\lambda_l}.
\end{eqnarray*}
The equilibrium point for these ``symmetric"  dynamics,
involving the roots of (\ref{eqn:quad}), can be computed as follows:
For the symmetric case where $\beta_r=1=\beta_{N-1}$ and $u=0.5$, the diagonal equilibrium solution is
\begin{align}
x_l^* = & 
\frac{x_{N-1}^*\lambda_l\delta}
{\lambda_l \delta - x_{N-1}^*(\lambda_s+\lambda_l) },\\
x_r^*,x_{N-1}^* = & 
\frac{-(\lambda_s+\lambda_l)
+\sqrt{(\lambda_s+\lambda_l)^2+ 2\delta^2\lambda_l} }
{2\delta},\\
x_s^* = & \frac{\lambda_s+\lambda_l}{\delta},
\end{align}
where 
$x_{N-1}^*=x_r^*$ is the positive root  (\ref{eqn:quad}).

Other solutions (when $u \neq 1/2$) are computed as the stationary points of the differential system.
\begin{figure}[htbp]
\centering
\includegraphics[scale=0.4]{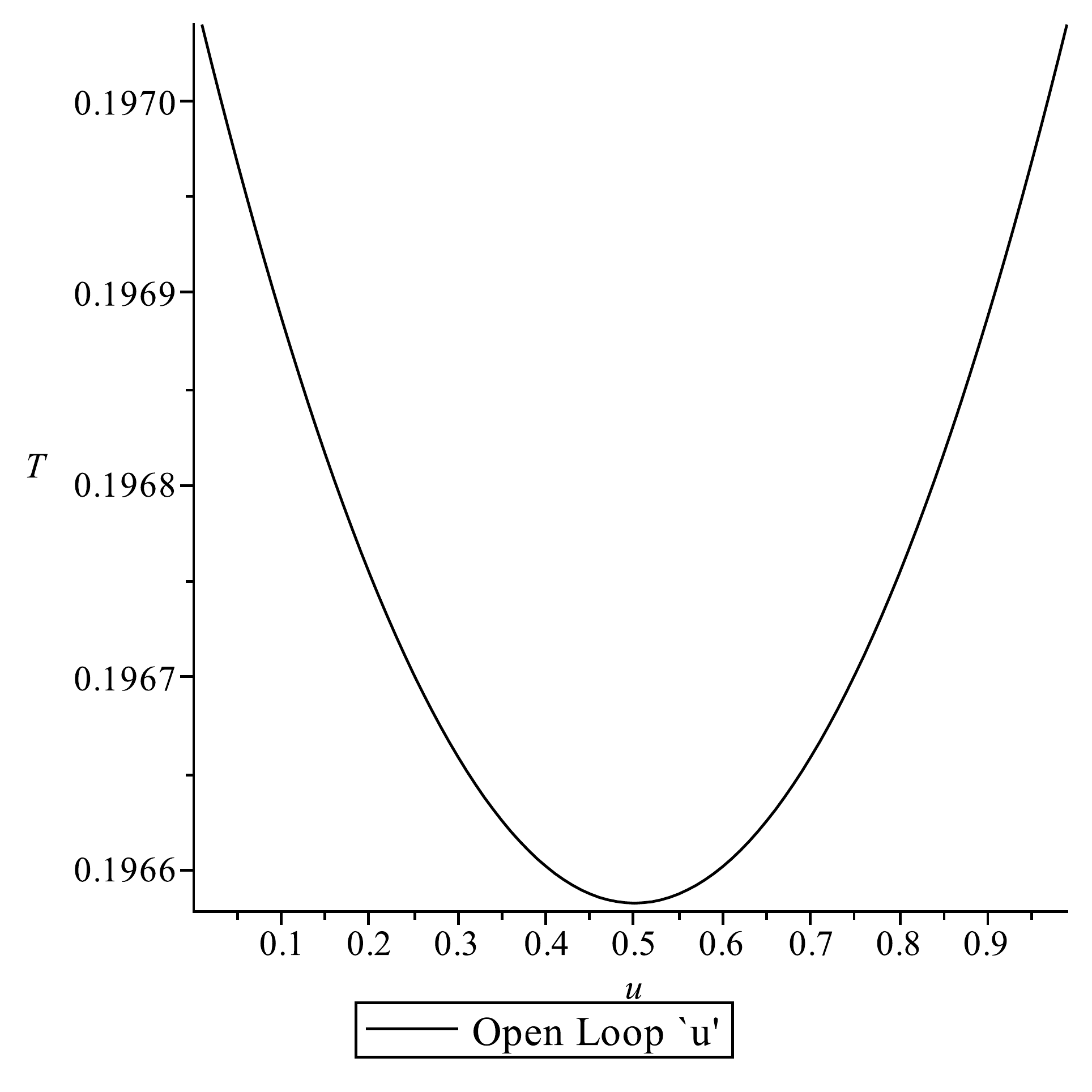}
\caption{Mean delay from leecher to seeder vs $u$}\label{no-choke-fig}
\end{figure}
~\\

To consider both a high and low uplink bandwidth cases, we use
superscripts $\hi$ and $\lo$ on the parameters above, 
and let $x_k := x^{\hi}_k + x^{\lo}_k$ for $k\in\{l,r,N-1,s\}$.  
We assume all peers know which segment is the rare one.
We also assume uplink bandwidth discrepancies
{\em only} eliminate  the \textit{swap} terms $x_r^{\hi} x_{N-1}^{\lo}$
through choking, i.e.,
a high-uplink peer will trade with a low-uplink peer if that 
low-uplink peer is providing the rare segment.
Since all transactions are one-way when leechers are involved, we have the dynamics:
\begin{align}
\dot{x}_l^{\lo}   =  \lambda_l^{\lo} -[\beta_r x_r +\beta_{N-1}x_{N-1}
+ (u \beta_r + (1-u)\beta_{N-1})x_s] x_l^{\lo}\\
\dot{x}_l^{\hi}  = \lambda_l^{\hi} -[\beta_r x_r +\beta_{N-1}x_{N-1}
+ (u \beta_r + (1-u)\beta_{N-1})x_s] x_l^{\hi},
\end{align}

Leecher dyanmics without uplink considerations above are obtained by adding these two equations 
with $\lambda_k := \lambda^{\hi}_k + \lambda^{\lo}_k$ for $k\in\{l,s\}$. 
The complete set of differential equations describing system behavior are: 
\begin{align}
\dot{x}_r^{\lo}  = &  \beta_r (u x_s +x_r)x_l^{\lo} -(\beta_{N-1}x_s 
+\beta_r x_{N-1})x_r^{\lo}\\
\dot{x}_r^{\hi}  = &  \beta_r (u x_s +x_r)x_l^{\hi} -(\beta_{N-1}x_s 
+\beta_r x_{N-1}^{\hi})x_r^{\hi}\\
\dot{x}_{N-1}^{\lo}  = &  \beta_{N-1} ((1-u) x_s +x_{N-1})x_l^{\lo} 
-\beta_r(x_s + x_r^{\lo})x_{N-1}^{\lo}\\
\dot{x}_{N-1}^{\hi}  = &  \beta_{N-1} ((1-u) x_s +x_{N-1})x_l^{\hi} 
-\beta_r(x_s + x_r)x_{N-1}^{\hi}\\
\dot{x}_s^{\lo}  = & \lambda_s^{\lo} +
(\beta_{N-1}x_s +\beta_r x_{N-1}) x_r^{\lo}
+\beta_r(x_s+x_r^{\lo})x_{N-1}^{\lo} -\delta^{\lo} x_s^{\lo}\\
\dot{x}_s^{\hi}  = & \lambda_s^{\hi} +
(\beta_{N-1}x_s +\beta_r x_{N-1}^{\hi}) x_r^{\hi}
+\beta_r(x_s+x_r)x_{N-1}^{\hi} -\delta^{\hi} x_s^{\hi}
\end{align}
Note how the final terms
involving population products (i.e., Hamer terms \cite{Daley-Gani}) 
on the right-hand-side  differ for 
the high and low uplink population dynamics to account for choking.

For these coupled dynamics involving both high-uplink and low-uplink peers,
we considered a special case with the number of high-uplink peers much smaller than those of low-uplink peers, specifically with:
$\lambda_l^{\lo} = 10$, $\lambda^{\hi}_s = 0.1$, $\lambda_l^{\hi} = 1$, 
$\lambda_s^{\hi}= 0.01$, $\delta^{\lo} = 9$, $\delta^{\hi}=0.9$, $\beta_r=\beta_N-1 = 1$.
The sensitivity of the delay from leecher to seeder for the low-uplink peers was illustrated in
Figure \ref{no-choke-fig}.
The same quantity for the two-type system  is plotted in Figure \ref{high-uplink-fig}, where 
we also plot the continuous globally-rarest-first optimal-control sojourn times. 
\begin{figure}[htbp]
\centering
\includegraphics[width=2.5 in]{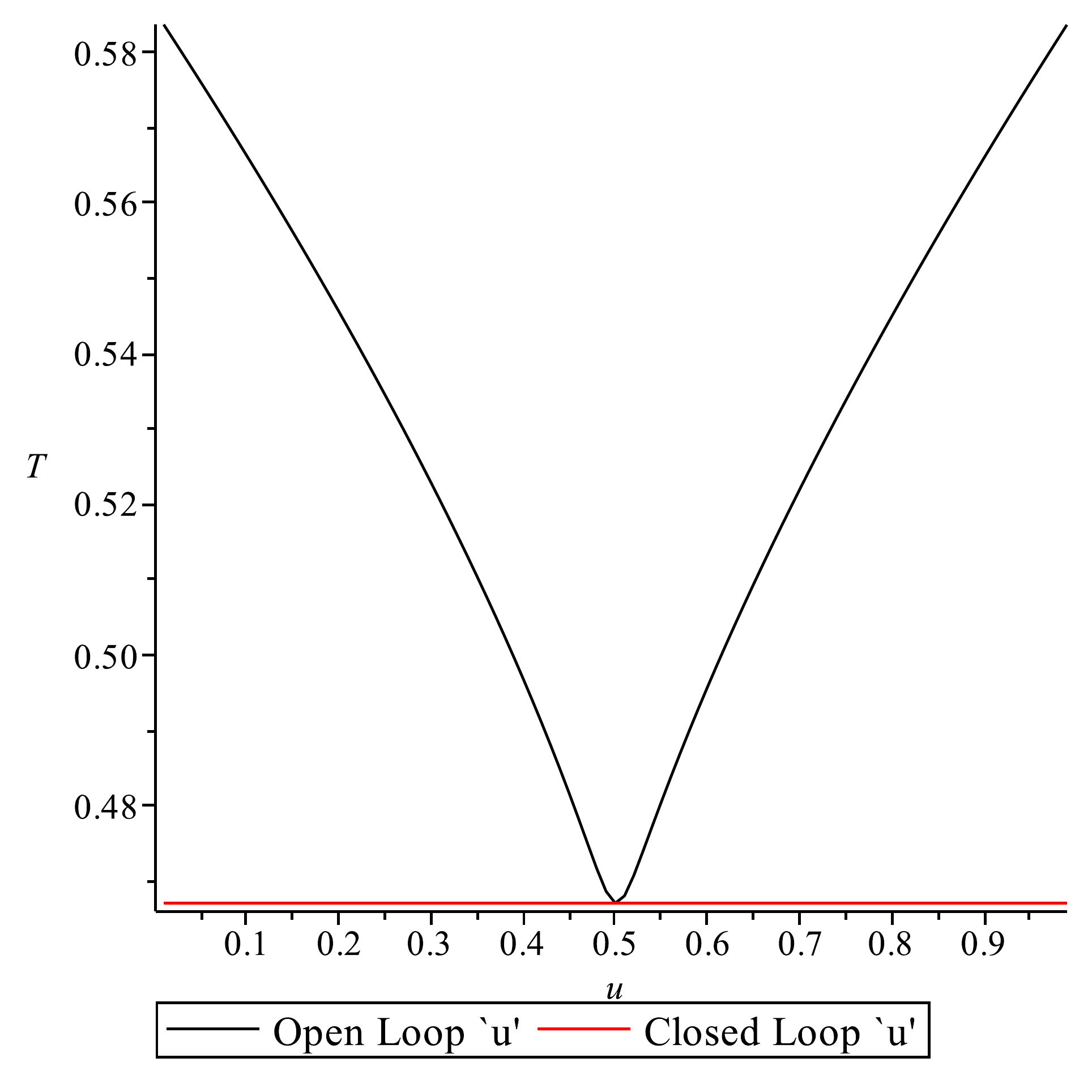}
\caption{Mean delay from leecher to seeder for high-uplink peers vs $u$
with $\delta^{\lo} = 9$ and $\delta^{\hi}=0.9$}\label{high-uplink-fig}
\end{figure}
Note the shape of the curve describing the sojourn time as a function of $u$ is highly affected by the parameter choices. For example, choosing smaller values 
$\delta^{\lo} = 1$ and $\delta^{\hi}=0.1$ produces 
Figure \ref{high-uplink-fig-2}.

\begin{figure}[htbp]
\centering
\includegraphics[width=2.5 in]{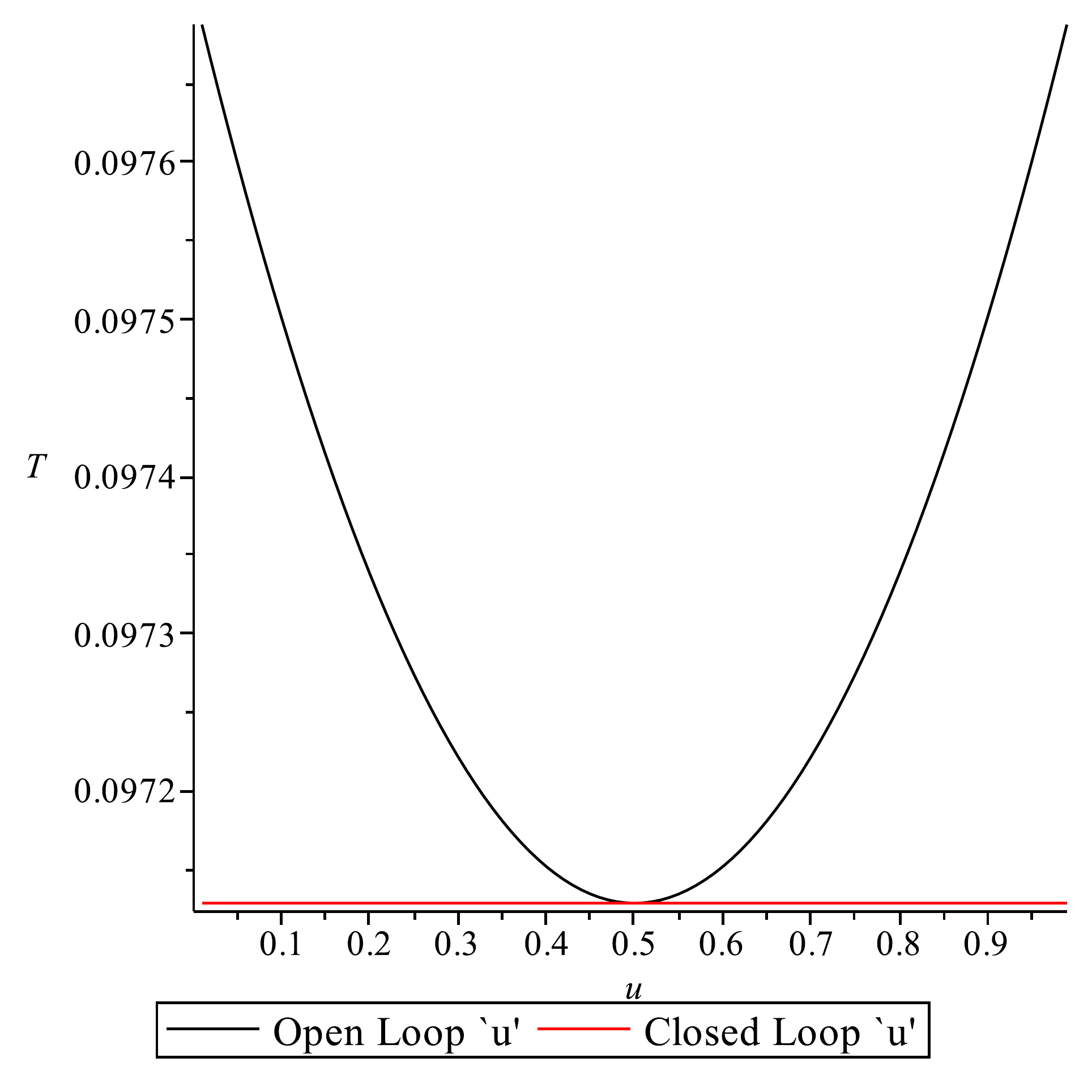}
\caption{Mean delay from leecher to seeder for high-uplink peers vs $u$
with 
$\delta^{\lo} = 1$ and $\delta^{\hi}=0.1$}\label{high-uplink-fig-2}
\end{figure}

The continuous globally rarest first control produces almost identical behavior to the case when $u \equiv 1/2$. This is because both the bang-bang controller and its continuous variation drive the system toward the case when the number of rare-segments holders is identical to the number of ``$N-1$ segments" holders. When this occurs, the values of the three controllers are identical. In computing these plots, the number of rare segment holders was initialized lower than the number of $N-1$ segment holders. This suggests that there is little to be gained from the more complex \textit{rarest first} controllers but a substantial amount 
to be gained in terms of swarm stability by setting $u \neq 1/2$.

\section{Summary}\label{concl-sec}

Ostensibly under BitTorrent incentives, the locally-rarest-first rule's objective is to prevent extinction of certain segments. The file segmentation system itself is intended to extend a peer's time in a swarm and thereby increase their cooperation (swapping activity).
Choking and optimistic unchoking mechanisms  result in a clustering of peers according to their allocated uplinks \cite{Legout06}, i.e.,  peers  will naturally tend to swap with others with similar uplink rates. Unchoking is intended to give peers with low uplinks a chance to increase (rehabilitate) their uplinks. Note how these incentives may be undermined by a large number of seeder peers. When these incentives do not work because many leecher peers are simply unwilling or unable to increase their uplinks and when the present seeders are congested,  unchoking may allow peers to access needed segments even if it means acquiring them at a rate significantly slower than their own allocated uplinks for file-sharing.

Typically, there are a  persistent number of  ``permanent" seeders which exclusively perform server transactions in the swarm that operate ``outside" of these incentives to distribute segments and prevent segment extinction \cite{Bieber06}. These permanent seeders are fostered by: fixed-rate pricing frameworks for Internet access, and limited liability for copyright infringement afforded by  the file segmentation framework itself as well as by third-party swarm discovery
(e.g.,  downloading torrents via  certain web sites). 

Given a positive departure rate for seeders, the presence of permanent seeders is here modeled by the assumption that $\lambda_s>0$, i.e.,  a small but  persistent arrival rate of seeders.

We discussed the optimality properties of the locally-rarest-first segment distribution policy and how to employ different policies that create  relatively rare segments. The leechers are thereby enticed to stay somewhat longer in the swarm and, consequently, cooperate more.

In the local swarm, delays will increase the availability of segments 
that can be exchanged between leechers. Globally, they would cause
leechers to become temporary seeders for more time in other swarms or 
even under certain assumptions in the local one~\cite{Bieber06}.

We note that delays will result in larger average total leecher populations $x_l$ that will consequently create a  lesser burden on the permanent seeder population $x_s$. To reflect the limited uplink capacity of the permanent seeders, we might want to reduce the parameter $\beta$ as $x_l$ increases  so that the $\beta x_l$ factor is fixed, i.e., so that the segment transfer rate $\beta x_l x_s$ reflects these limits.

Finally, we also studied the system
interpreting the two-segment model as
a rare segment and a aggregation of the rest, naturally leading
to different associated uplink parameters. 
We numerically showed how leecher and seeder sojourn times
were affected by the (lumped) model parameters, particularly the proportion of
the population that are seeders, and argued that
taking the control $u \neq 1/2$ leads to good stability properties.

\section{Acknowledgements}
G. Kesidis' work was funded in part by NSF CNS NeTS Grant No. 0915928.

\bibliography{LocalBib}
\bibliographystyle{IEEEtran}

\end{document}